\newcommand{\leftr}{\stackrel{R}{\leftarrow}}
  \mathchardef\stdcomma=\mathcode`,
\newcolumntype{L}[1]{>{\raggedright\let\newline\\\arraybackslash\hspace{0pt}}m{#1}} 
\lstdefinelanguage{json}{
    basicstyle=\normalfont\ttfamily,
    numbers=left,
    numberstyle=\scriptsize,
    breaklines=true,
    frame=lines,
    backgroundcolor=\color{gray!10},
    showstringspaces=false,
    string=[db]{"},
    stringstyle=\color{green!50!black},
    morestring=[s][\color{black}]{\ \ "}{":},
    keywordstyle=\color{blue},
    keywords={true,false,null},
    literate=
     *{0}{{{\color{red}0}}}{1}
      {1}{{{\color{red}1}}}{1}
      {2}{{{\color{red}2}}}{1}
      {3}{{{\color{red}3}}}{1}
      {4}{{{\color{red}4}}}{1}
      {5}{{{\color{red}5}}}{1}
      {6}{{{\color{red}6}}}{1}
      {7}{{{\color{red}7}}}{1}
      {8}{{{\color{red}8}}}{1}
      {9}{{{\color{red}9}}}{1}
      {.}{{{\color{red}.}}}{1}
      {:}{{{\color{gray}{:}}}}{1}
      {,}{{{\color{gray}{,}}}}{1}
      {\{}{{{\color{gray}{\{}}}}{1}
      {\}}{{{\color{gray}{\}}}}}{1}
      {[}{{{\color{gray}{[}}}}{1}
      {]}{{{\color{gray}{]}}}}{1},
}
\newacronym{ssi}{SSI}{Self-Sovereign Identity}
\newacronym[plural=VCs,firstplural=Verifiable Credentials (VCs)]{vc}{VC}{Verifiable Credential}
\newacronym[plural=RPs,firstplural=Relying Parties (RPs)]{rp}{RP}{Relying Party}
\newacronym{ocsp}{OCSP}{Online Certificate Status Protocol}
\newacronym{crl}{CRL}{Certificate Revocation List}
\newacronym[plural=AMQs,firstplural=approximate membership query data structures (AMQs)]{amq}{AMQ}{approximate membership query data structure}
\newacronym{ebsi}{EBSI}{European Blockchain Service Infrastructure}
\newacronym{oid4vc}{OID4VC}{OpenID for Verifiable Credentials}
\newacronym{anoncreds}{AnonCreds}{Anonymous Credentials}
\newacronym[plural=DIDs,firstplural=Decentralized Identifiers (DIDs)]{did}{DID}{Decentralized Identifier}
\newacronym{mse}{MSE}{mean squared error}
\newacronym{mae}{MAE}{mean absolute error}
\newacronym{bsl}{BSL}{Bitstring Status List}
\begin{document}
\title{CRSet: Private Non-Interactive Verifiable Credential Revocation}
%
%
\author{Felix Hoops \and
Jonas Gebele \and
Florian Matthes}
\authorrunning{F. Hoops et al.}
%
\institute{Technical University of Munich, Munich, Germany\\
\email{\{felix.hoops,jonas.gebele,matthes\}@tum.de}}
\maketitle              
\begin{abstract}
Like any digital certificate, Verifiable Credentials (VCs) require a way to revoke them in case of an error or key compromise.
Existing solutions for VC revocation, most prominently Bitstring Status List, are not viable for many use cases because they may leak the issuer's activity, which in turn leaks internal business metrics. For instance, staff fluctuation through the revocation of employee IDs.
We identify the protection of issuer activity as a key gap and propose a formal definition for a corresponding characteristic of a revocation mechanism.
Then, we introduce CRSet, a non-interactive mechanism that trades some space efficiency to reach these privacy characteristics. For that, we provide a proof sketch.
Issuers periodically encode revocation data and publish it via Ethereum blob-carrying transactions, ensuring secure and private availability.
Relying Parties (RPs) can download it to perform revocation checks locally.
Sticking to a non-interactive design also makes adoption easier because it requires no changes to wallet agents and exchange protocols.
We also implement and empirically evaluate CRSet, finding its real-world behavior to match expectations.
One Ethereum blob fits revocation data for about 170,000 VCs.

\keywords{Revocation \and Verifiable Credentials \and Self-Sovereign Identity.}
\end{abstract}
\section{Introduction}

\gls{ssi} \cite{allen2016path} is gaining traction as a user-centric approach to digital identity, offering tangible benefits in security and efficiency. Government-led initiatives like the European Blockchain Service Infrastructure and the U.S. Department of Homeland Security are already working on \gls{ssi}-based systems for issuing digital credentials like diplomas \cite{europaEBSI} and immigration records \cite{dhsWhoYouAre}.

\glspl{vc} are digitally signed, structured data objects used to represent attestable claims such as identity documents, professional licenses, or organizational roles \cite{w3VerifiableCredentials}. As with any credential system, the ability to revoke previously issued \glspl{vc} is essential to correct errors, respond to key compromise, or reflect changes in status. Revocation, however, remains poorly addressed in the \gls{ssi} ecosystem \cite{mazzocca2024survey,satybaldyTaxonomyChallengesSelfSovereign2024,schardongSelfSovereignIdentitySystematic2022}. No existing solution protects issuers from leaking operational information such as the number or timing of revocations. This missing capability poses a significant barrier to adoption. Organizations issuing revocable \glspl{vc} today risk exposing sensitive internal data. Most mechanisms reveal each revocation publicly, enabling observers to infer patterns such as employee departures when \glspl{vc} are used for access or staff credentials.

Conventional mechanisms from public-key infrastructures like \gls{crl} or the \gls{ocsp} are equally problematic. They either expose aggregate revocation counts \cite{rfc5280} or leak usage metadata during verification \cite{rfc2560}, making them unsuitable for \gls{ssi}’s privacy requirements.
\gls{ocsp} stapling is closest, but it still leaks when a credential is intended to be used.
The most widely used \gls{ssi}-native mechanisms also fall short. The W3C \gls{bsl} \cite{w3BitstringStatus} compresses revocation data efficiently but makes individual revocations observable. Hyperledger \gls{anoncreds} Revocation \cite{hlAnoncreds} offers subject anonymity, but still discloses issuer-level revocation information through public accumulator data.

\begin{table*}[!t]
\centering
\caption{Comparison of Credential Revocation Mechanisms for SSI}
\renewcommand{\arraystretch}{1.2}
\setlength{\tabcolsep}{3pt}
\begin{tabular}{p{0.24\textwidth}p{0.18\textwidth}p{0.12\textwidth}p{0.12\textwidth}p{0.12\textwidth}p{0.12\textwidth}}
\toprule
\textit{Mechanism} & \textit{Revocation \mbox{Data}} & \textit{Subject \mbox{Privacy}} & \textit{RP \mbox{Privacy}} & \textit{Issuer \mbox{Privacy}} & \textit{Non-Interactive} \\ \midrule
\mbox{Bitstring} \mbox{Status List} & $< 1$ bit/cap. & PIR\footnotemark & No & Rate\footnotemark & \textbf{Yes} \\
\mbox{AnonCreds} \mbox{Rev.} & $>12$ B/cap. & PIR & \textbf{Yes} & No & No \\
Prevoke \cite{manimaran2024prevoke} & ? & \textbf{Yes} & \textbf{Yes} & Rate & No \\
Evoke \cite{mazzocca2024evoke} & $\geq 32$ B/rev.& \textbf{Yes} & Depends\footnotemark & Rate & No \\
\mbox{Sitouah} \mbox{et al. \cite{sitouah2024untraceable}} & $\geq 2$ B/valid\footnotemark & \textbf{Yes} & \textbf{Yes} & Rate & No \\
\rowcolor{lightgray}
CRSet & $\approx 6$ bit/cap. & \textbf{Yes} & \textbf{Yes} & \textbf{Yes} & \textbf{Yes} \\
\bottomrule
\end{tabular}%
\label{tab:comparison}
\end{table*}

\footnotetext[1]{Private Information Retrieval}
\footnotetext[2]{Revocation Rate Hiding}
\footnotetext[3]{Depends on the exact workings of the used gossip protocol.}
\footnotetext[4]{Practical measurement from paper for parameterization with at least $10^5$ capacity.}

We introduce Credential Revocation Set (CRSet), a revocation mechanism for non-anonymous credentials that require strong protection of issuer activity. CRSet uses Bloom filter cascades \cite{larisch2017crlite} combined with fixed-size padding and a regular publishing schedule to conceal both the total number and frequency of revocations. It enables revocation checks without active participation of issuers, and without exposing metadata, ensuring privacy for issuers, \glspl{rp}, and credential holders alike.
No prior revocation mechanism we are aware of protects absolute and relative issuer activity to the level that CRSet does.
We provide a high-level comparison of CRSet and other existing \gls{ssi}-native revocation mechanisms in Table~\ref{tab:comparison}.

Our contributions are:

\begin{itemize}
    \item We provide a structured analysis of privacy concerns in credential revocation mechanism within \gls{ssi} ecosystems, with a focus on issuers. We formalize issuer privacy using a game-based definition of activity indistinguishability.
    \item We introduce CRSet, a space-efficient, non-interactive status revocation mechanism that improves metadata privacy for issuers, \glspl{rp}, and subjects. We prove that CRSet satisfies the defined issuer privacy properties.
    \item We implement both a standalone prototype of the CRSet data structure and a complete end-to-end demonstrator. Our empirical evaluation confirms CRSet’s privacy properties and analyzes performance and scalability.
\end{itemize}

The remainder of this paper is organized as follows: 
First, we perform a brief survey of related research work in Section~\ref{sec:related}.
Next, in Section~\ref{sec:system}, we describe stakeholders and privacy concerns before setting goals and reflecting on possible underlying data structures.
Then, we present the CRSet design in Section~\ref{sec:mechanism}, covering data structure and publishing.
In Section~\ref{sec:proof}, we introduce a formal definition for an issuer's privacy along with notation and a proof sketch showing that CRSet meets that level of privacy.
In Section~\ref{sec:evaluation}, we implement and empirically evaluate CRSet, before coming to a conclusion in Section~\ref{sec:conclusion}.

\section{Related Work}
\label{sec:related}

The arguably most notable revocation mechanism for \glspl{vc} is the W3C's Bitstring Status List (BSL) \cite{w3BitstringStatus}. As the name suggests, it is a bitstring of pre-allocated length. Every issued \gls{vc} is associated with a unique index, which can be set to one to revoke it. Since most values will typically be zero, GZIP compression is applied to great effect before publicly hosting it on the internet.
\glspl{rp} fetch a full status list for a local revocation check, which creates some level of private information retrieval.
However, when checking a \gls{vc}, the \gls{rp} will notify the corresponding issuer that it is doing so when fetching from the issuer's server.
\gls{bsl} also reveals every individual revocation an issuer makes.

The only other practically relevant mechanism is one tailored to Hyperledger's \gls{anoncreds} credential standard \cite{hlAnoncreds}.
\gls{anoncreds}' focus is to enable subjects to interact with \glspl{rp} while staying anonymous, which requires anonymous revocation checks. An approach based on a cryptographic accumulator is used. Subjects have to interactively construct a zero-knowledge proof (ZKP) of inclusion in that accumulator, which can be verified by an \gls{rp}. Inclusion signifies that a \gls{vc} has not been revoked.
But once again, this mechanism has privacy issues.
The ZKP relies on a public set of prime numbers (i.e., a tails file), where one prime is associated with each issued credential. Due to the immense size, this file is usually on a web server owned by the issuer, making access trackable.
In addition, the ZKP construction requires knowing which of these primes belong to revoked credentials. That information is also public in the form of what is effectively a bitstring using a one to indicate revocation. Although the exact implementation can vary (e.g., expressed as deltas), the information is present and allows the same kinds of insights into the issuer's operations that \gls{bsl} does.

A few less practically established revocation mechanisms for \gls{ssi} ecosystems exist. Some of them predate the notion of \gls{ssi}, but consider credentials in the same spirit.
Specifically, there are several works addressing revocation for anonymous credentials \cite{lueks2017fast,acar2011revocation} similar to \gls{anoncreds}, which builds on \cite{camenisch2001efficient}.
Sitouah et al. \cite{sitouah2024untraceable} propose an interactive mechanism based on a Merkle tree accumulator, which limits how long an \gls{rp} can trace the credential status.
Chotkan et al. \cite{chotkan2022distributed} focus on creating a revocation mechanism that is as decentralized as possible, even avoiding the logical decentralization of distributed ledger technology.
Mazzocca et al. \cite{mazzocca2024evoke} develop a revocation mechanism using an ECC-based accumulator, which is suitable for limited hardware power in IoT scenarios. 
Manimaran et al. \cite{manimaran2024prevoke} propose an accumulator-based mechanism and consider an issuer's privacy, but focus on revocation rate only.

Bloom filters have been introduced by Bloom in \cite{bloom1970space} and have established themselves as the default probabilistic membership query data structure.
Layering multiple Bloom filters has been explored in several works for different applications related to indexing \cite{wang2009casab,salikhov2014using,rozov2014fast,mousavi2019constructing}.
Larisch et al. \cite{larisch2017crlite} have specifically applied Bloom filter cascades to the problem of certificate revocation for TLS.


\section{Design Considerations}
\label{sec:system}
We start out by describing the kind of \gls{ssi} ecosystem we design for, set goals, and finally motivate our choice of data structure.

\begin{figure}[!t]
    \centering
    \includegraphics[width=0.75\textwidth]{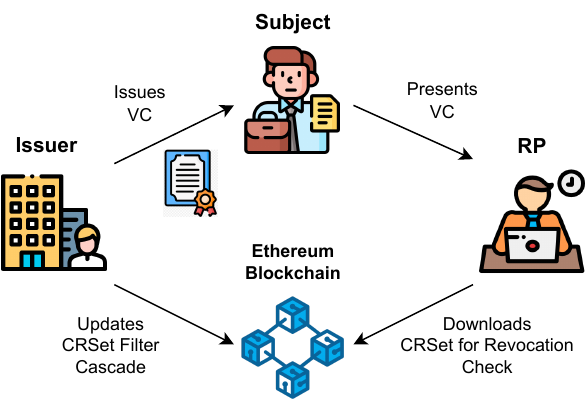}
    \caption{Stakeholder Interactions in a Typical SSI Ecosystem Using CRSet}
    \label{fig:trust-route}
\end{figure}

\subsection{System and Stakeholders}

A typical \gls{ssi} ecosystem involves three types of stakeholders: issuers, subjects, and \glspl{rp}.
Subjects can receive \glspl{vc} from issuers (e.g., employee IDs) and then present these \glspl{vc} to various \glspl{rp} at will (e.g., to gain access to a service).
To maintain the trust of \glspl{rp}, issuers must maintain record of their issued \glspl{vc} and publish some form of revocation data that invalidates \glspl{vc} when they are no longer accurate (e.g., employee leaves company). \glspl{rp} have to access this data whenever they validate a \gls{vc}.
In Figure~\ref{fig:trust-route}, we depict a concrete version of this ecosystem using our proposed revocation mechanism.

Each stakeholder may act as an adversary attempting to infer sensitive metadata.
For instance, issuers could monitor presentation behavior if they are involved in revocation checks (e.g., hosting data on a web server), violating the privacy of subject and \gls{rp}.
And anyone, even external actors, could access public revocation data and attempt to infer issuance or revocation events, possibly revealing metadata about an issuer (e.g., employee churn).
Therefore, the system must mitigate threats to two core categories of information: \textit{presentation activity}, which concerns metadata from presentation of \glspl{vc}, and \textit{issuer activity}, which covers details about issuance and revocation activity.

\subsection{Goals}
\label{sec:goals}

Issuers may have vastly different issuance and revocation volumes, making it essential that the status mechanism features \textbf{parameterizable capacity} regarding maximum capacity of an instance (e.g., one status list) to any size fit for the issuer's use case. In addition, this gives issuers the ability to make their own trade-offs between data size and the timeframe that one revocation mechanism instance is supposed to cover.

Next, we want to provide \textbf{confidentiality of issuer activity}.
When analyzing a single snapshot of revocation data from an issuer, an adversary must not be able to infer issuer metadata. Specifically, that is the number of revoked or issued \glspl{vc} that is covered by the revocation data. This implies that an adversary can also not infer the revocation rate.
Further, even when having access to a possibly infinite series of subsequent versions of revocation data from an issuer, it must still not be feasible for an adversary to extract issuer metadata. That covers relative and absolute numbers for revoked or issued \glspl{vc}.

Further, the presentation process between a subject and an \gls{rp} must be private between these two parties, creating \textbf{confidentiality of presentation activity}. That also includes metadata. No one, including the issuer of the presented \gls{vc}, must be able to know that a \gls{vc} from said issuer was presented. This implies that it is also impossible to infer who the \gls{rp} or the subject is, as the interaction should not be detectable in the first place.

Finally, to facilitate adoption, the credential revocation mechanism must provide good \textbf{interoperability}.
It must be simple to implement and as compatible with existing systems as possible. Software used by issuers for issuing and \glspl{rp} for verifying must naturally always be aware of the revocation mechanism. But user agents, commonly known as wallets, do not necessarily need to.
Thus, we aim to design a mechanism that requires no new interaction from a user agent, ensuring that existing ones can still be used without modification. It also entails that they can be more light-weight, as they do not need to perform any of the potentially expensive cryptographic operations associated with interactive revocation mechanisms.
That means that all relevant information for a revocation check must be compatible with the existing W3C \gls{vc} standard and contained in its \texttt{credentialStatus} field \cite{w3VerifiableCredentials}.

\subsection{Data Structure Considerations}

Bloom filters \cite{bloom1970space} are arguably the most widely-used \gls{amq}. They consist of a bitstring of length $m$ and, for the scope of this paper, just one hash function $h: \{0,1\}^* \rightarrow \mathbb{N}_m$.
An element $d$ can be inserted by setting the bit at position $h(d)$ to one.
Testing if an element $d'$ is a member requires checking if the bit at $h(d')$ is set to one. Due to possible hash collisions, this test may result in false positives but never false negatives.

Filter cascades based on an \gls{amq} offer a way to eliminate the false positives on a known subset (e.g., issued \gls{vc} IDs) of a universe (e.g., all possible \gls{vc} IDs) by encoding the false positives of an \gls{amq} in a subsequent \gls{amq} stopping when no more occur.
This is space efficient and suitable for revocation \cite{larisch2017crlite}.

To minimize total data size, we have also considered filter cascades based on other types of possibly more efficient filters, specifically Cuckoo filters \cite{fan2014cuckoo}, XOR filters \cite{grafXorFiltersFaster2020}, and Binary Fuse filters \cite{graf2022binary}. 
However, we did not find them to be suitable for this specific application. Cuckoo filters rely on buckets that have fingerprints (i.e., hashes) explicitly stored within, allowing an adversary to simply count them.
XOR and Binary Fuse filters are not freely configurable with respect to a desired false positive rate $p$. For the purpose of building a space-efficient filter cascade, the available false positive rates of $0.4\%$ and $0.002\%$ are too small. It has previously been shown that for a Bloom filter cascade, a false positive rate of about $50\%$ is close to optimal \cite{larisch2017crlite}. Since these filters share a bit length per element of $\gamma \log_2(1/p)$ \cite{graf2022binary}, only distinguished by the constant factor $\gamma$, we assume that the optimal cascade parameters are similar, making any resulting cascades considerably less space-efficient.

\section{Proposed Revocation Mechanism}
\label{sec:mechanism}
In this section, we present CRSet. We start by explaining how \glspl{vc} are represented in the Bloom filter cascade and how we pad the data. Next, we address how the cascade is published on the Ethereum blockchain, followed by the algorithms for creating and reading a CRSet instance.
Finally, we reflect on our initial goals.

\subsection{Identifiers and Hashing}

Every \gls{vc} is randomly assigned a revocation ID, which we will refer to as just “ID”, from the large universe $U=\{0,1\}^{256}$. It is entirely unrelated to the \gls{vc}'s \texttt{id} field \cite{w3VerifiableCredentials}. Using random IDs prevents statistical attacks \cite{reviriegoAttackingPrivacyApproximate2023}. At any discrete point in time $i$, an issuer knows the set of IDs of his issued \glspl{vc} $C_i \subset U$. Further, he knows which IDs he has revoked $R_i \subseteq C_i$ and which are still valid $V_i \subseteq C_i$, where $R_i \cap V_i = \emptyset$. To simplify notation, we use these sets without index when referring to just one point in time.
Based on these sets of IDs, the Bloom filter cascade is created and published.

Usually, Bloom filters tend to use a hash function that is as fast as possible.
But due to our privacy goals, we mandate that the filters use a cryptographic hash function. This ensures that it is not possible to efficiently reverse any hashes, which could otherwise make it possible for an adversary to obtain at least a set of possible IDs based on the set bits in a filter.

\subsection{Padding}
\label{sec:padding}

Any adversary could infer the number of elements in the filter cascade based on the metadata of the Bloom filters making up the cascade, violating our goals. To combat this, we introduce padding. It is randomly chosen with the same process the IDs are.
The padded sets $\hat R$ and $\hat V$ are created such that $R \subseteq \hat R$, $V \subseteq \hat V$, and $\hat R \cap \hat V = \emptyset$.
Because it is also trivial to infer the ratio of included and excluded elements in a filter cascade, the padding size must be a fixed ratio to avoid leaking information.
We propose to adopt a fixed ratio of $n_{max} = |\hat V| = (1/2) |\hat R|$. This ensures that an issuer can use their cascade for a long time and still always emergency-revoke all valid certificates, which requires capacity among the revoked IDs. Since revoked IDs are padded to a larger size than the valid ones, they have to be the exclusions in a cascade to minimize data size \cite{larisch2017crlite}.

For any year $y$ (s.t. $y>t$), the total number of issued \glspl{vc} is defined as $|C_y| = v \sum_{i=y-t}^{y} \delta^i$ in the long-term, where \glspl{vc} have begun to expire, $v\in \mathbb{Z}$ is the yearly issuance volume, $x\in (0,1)$ the revocation percentage based on $v$, $\delta \in (1,2]$ the year-on-year increase of $v$, and $t\in \mathbb{Z}$ the lifetime of a \gls{vc}.
The minimum needed capacity for valid and revoked \glspl{vc} at a point in time $y$, ensuring $|\hat R|$ is large enough to enable total revocation, is:

\begin{equation*}
\begin{split}
    |\hat V| &\geq (1-x) * |C_y| = (1-x)v \sum_{i=y-t}^{y} \delta^i\\
    |\hat R| &\geq  (x+(1-x)) * |C_y| = |C_y| = \frac{1}{1-x} |\hat V|
\end{split}
\end{equation*}

\noindent
Assuming long-term revocation rates never exceed $50\%$, we fix the ratio at the worst case $x=1/2$, and arrive at $|\hat V| = (1/2) |\hat R|$.

\subsection{Publishing Mechanism}

The publishing mechanism is the main influence on choosing CRSet capacity $n_{max}$.
To maintain perfect deniability, an issuer must update, i.e., recreate and publish, the filter cascade in regular intervals, regardless of whether there are new revocations (e.g., every 24 hours).
The cascade should be published on a decentralized storage system, for which we use the Ethereum blockchain's blob-carrying transactions \cite{eip4844} due to their comparatively low cost for writing data to a public, distributed system. 
To optimize space usage on blobs, the padded size $n_{max}$ should be chosen such that the total bit size of the cascade is a multiple of the blob size.
Building on cascade size calculations from \cite{larisch2017crlite}, for one blob, CRSet can support $n_{max} \approx 128KB / 6b \approx 170.666$ \glspl{vc}.
Blobs are officially kept for a limited time \cite{eip4844}, which works well with regular updates where only the latest one is relevant for revocation checks.

\begin{figure}[t]
    \centering
    \begin{lstlisting}[language=json]
{ "credentialStatus": {
    "id": "eip155:1:0x32...53:dd...d3",
    "type": "CRSetEntry"
}}
    \end{lstlisting}
    \caption{Example of a Credential Status Entry for CRSet. The hexadecimal string segments have been shortened to improve readability.} 
    \label{fig:credentialStatus}
\end{figure}

The \gls{vc} specification defines the format of a credential status entry \cite{w3VerifiableCredentials}. It must be called \texttt{credentialStatus} and contain an \texttt{id} property of type URI, and a \texttt{type} property identifying the revocation mechanism, where we use \texttt{"CRSetEntry"}.
The \texttt{id} field must point to the published cascade.
We use the Chain Agnostic Improvement Protocol 10 (CAIP-10) \cite{caip10spec} as a standardized way to refer to the blockchain and the account on which it is published. Then we concatenate the revocation ID.
An example of a complete status section can be seen in Figure~\ref{fig:credentialStatus}.

\subsection{Algorithms and Performance}
\label{sec:algos}

\begin{algorithm}[t]
\caption{Creating a CRSet Filter Cascade.}
\label{alg:build}
\begin{algorithmic}[1]
\Require Valid IDs $V$, revoked IDs $R$, padding size $n_{max}$
\Ensure A padded CRSet Bloom filter $cascade$ with its $salt$
\State $(cascade, salt) \gets ([], rndElement( \{0,1\}^{256} ))$
\State $(p_0, p) \gets (sqrt(p)/2, 1/2)$ \Comment{Target false positive rates.}
\State $included \gets padWithRandomIDs(V, n_{max})$ \Comment{Working set for included IDs.}
\State $excluded \gets padWithRandomIDs(R, 2 n_{max})$ \Comment{Working set for excluded IDs.}
\For{$i \gets 0; |included| > 0; i \gets i+1$} \Comment{Eliminate false positives.}
    \State $cascade[i] \gets bloom(\{ id \| i \| salt ~|~ id \in excluded \},\textbf{if }i=0\textbf{ then }p_0\textbf{ else }p,1)$
    \State $fps \gets \{ id ~|~ id \in excluded \land id \| i \| salt \in cascade[i] \}$ \Comment{False Positives.}
    \State $excluded \gets included$
    \State $included \gets fps$
\EndFor
\State \textbf{return} $(cascade, salt)$
\end{algorithmic}
\end{algorithm}

When creating a new CRSet filter cascade, an issuer must choose an $n_{max}$ and apply Algorithm~\ref{alg:build}.
This same algorithm is also used for updates, which must be complete recreations in order to avoid making deltas quantifiable from the outside.
The algorithm generates random IDs for padding and subsequently creates the filter cascade.
Intuitively, the cascade is created to encode $V$ while not accidentally encoding elements from $R$ because of hash collisions.
It is created such that the first filter contains $V$. Every subsequent filter contains the relevant false positives of the filter before it until no more exist.
The individual Bloom filter initializations require three standard parameters: the set of entries (and implicitly the size of that set), the expected false positive probability, and the number of used hash functions.
We are using the two-probability strategy from \cite{larisch2017crlite} and thus have a $p_0$ for just the first filter, and a $p$ for all subsequent ones. The ideal number of hash functions in that case is one.

To ensure that cascades behave similar to expectation, which assumes true randomness, we vary the hashing between filters by introducing the level index as a deterministic salt.
We also generate a random salt for the entire cascade, which fulfills two purposes. First, it helps to make the filters more resilient against brute-force attacks. Second, it provides some randomness to the probabilistic process that is filter creation. In the event of it taking too long to converge toward zero false positives, the process can simply be restarted with the same inputs, which generates a new salt, making it likely to terminate correctly.

\begin{algorithm}[t]
\caption{Checking if an ID Is Unrevoked}
\label{alg:test}
\begin{algorithmic}[1]
\Require A \gls{vc}'s revocation $id$, and a filter $cascade$ with its $salt$
\Ensure \texttt{True} if the $id$ is in the $cascade$ and thus valid
\For{$i \gets 0; i < length(cascade); i \gets i+1$}
    \If{$id \| i \| salt ~\notin~ cascade[i]$}
        \State \textbf{return} $i \bmod 2 = 1$ \Comment{Even levels confirm cascade membership.}
    \EndIf \Comment{Odd levels refute cascade membership.}
\EndFor
\State \textbf{return} $length(cascade) \bmod 2 = 1$
\end{algorithmic}
\end{algorithm}

To test if a given ID is valid (i.e., contained in the filter cascade) we must apply Algorithm~\ref{alg:test}. This test depends on the assumption that only IDs from existing \glspl{vc} are tested: $id \in R \cup V$. Otherwise, results will be random. This works because \glspl{rp} only depend on a status check via membership test if they have received a \gls{vc} correctly signed by its issuer. Only then, they will retrieve that issuer's CRSet.
Because Bloom filters can not have false negatives, the index of the first filter that does not contain a given ID determines if the ID is considered a member of the cascade.

Adapting from \cite{larisch2017crlite}, we can express the total cascade size in bits, depending on $n_{max}$, as $5.64 n_{max}$.
We observe that the required size of a filter cascade grows linearly with the capacity.
Using the upper bound introduced in \cite{larisch2017crlite}, the number of levels in the cascade grows logarithmically with the capacity to around 40 levels for several hundred thousand \glspl{vc}. 
This represents the worst case lookup complexity, where every level needs to be tested.
With one hash to compute per level, it is more expensive than looking up one index for \gls{bsl}, but likely still feasible for an \glspl{rp} on modern server hardware.

\subsection{Summary}

The presented architecture for CRSet clearly addresses some initial goals:

\begin{itemize}
    \item Issuers can choose any capacity below the current blob limit per block.
    \item Presentation activity is not trackable because data is fetched from a blockchain with various access paths. Very careful \glspl{rp} could maintain their own node.
    \item Interoperability is achieved through compliance with the \gls{vc} standard. Since the revocation check is non-interactive for the subject, broad interoperability with existing wallet agents and transport protocols is given.
\end{itemize}

\noindent
In the next section, we will further address the remaining goal and show that issuer activity is kept confidential by CRSet.

\section{Issuer Privacy Model}
\label{sec:proof}
Next, we formalize a strong notion of confidential issuer activity in the context of revocation and provide a proof sketch to show that CRSet achieves it.
The construction parallels game-based proofs from cryptography \cite{shoup2004sequences}.
We define a revocation mechanism through a probabilistic function we name
\begin{equation*}
\mathit{create}: \{ (V,R,n) \in (2^U \times 2^U \times \mathbb{N}) | V \cap R = \emptyset \} \to \{0,1\}^* \text{,}
\end{equation*}
which creates the public revocation data $S_i=\mathit{create}(R_i,V_i,n)$, based on valid and revoked IDs. While a straightforward revocation list would use just $R_i$, other mechanisms may make use of both sets. Capacity-based mechanisms (e.g., \gls{bsl}, CRSet) can make use of $n$, which is a measure of capacity, which can be interpreted according to the mechanism's design.
On $S_i$, we define a membership operator $\in : U \times \{0,1\}^* \to \{true, false\}$. Going forward, we define $d \in S_i$ to mean $d$ is valid.
Any correct \textit{create} function must faithfully represent its input, meaning $d \in V \implies d \in S_i \text{ and } d \in R \implies d \notin S_i$.

Next, we introduce the Chosen Count Indistinguishability Game between a challenger and an adversary $A$ as $CCIG(create, A, l, n)$:

\begin{enumerate}
    \item The adversary $A(create,l,n)$ chooses two distinct, plausible series of counts of valid and revoked \glspl{vc} $N_{V_0},N_{V_1},N_{R_0}, \text{and } N_{R_1}$ over $l$ time steps, where each individual count is smaller than $n$. He sends these to the challenger.
    \item The challenger randomly picks a bit $b$. For every point $0\leq i<l$, he randomly generates sets of $N_{V_b,i}$ valid IDs $V_i$ and $N_{R_b,i}$ revoked IDs $R_i$ in a way that forms a plausible history. Then, for each point $i$, he creates the corresponding revocation data $S_i$ and sends this data to the adversary.
    \item Based on the data, the adversary $A(create,l,n, S_{0}\dots S_{l-1})$ must output his prediction for the used series of counts of IDs as $\hat b$. The result of the game if the adversary wins the game (i.e., $\hat b = b$) is $1$ and otherwise $0$.
\end{enumerate}

\noindent
Plausibility regarding the counts and sets means that nothing may contradict the life cycle of \glspl{vc}. To simplify this slightly for us (and the attacker), we assume no \gls{vc} expiry. Consequently, a \gls{vc} always starts out as valid and stays permanently revoked, if it is revoked.

Based on this game, we can define a more formal notion of confidentiality of issuer activity, encapsulating the idea that an adversary can not accurately compute the absolute or relative number of valid or revoked \glspl{vc} that a given issuer has in circulation at any point in time based on a series of historical revocation data:

\begin{definition}[Chosen Count Indistinguishability] \label{def:countindis}
    A revocation data structure defined by a \texttt{create} function is considered \textbf{chosen count indistinguishable} iff for all efficient algorithms $A$, history lengths $l$, and capacities $n$,
    \begin{equation*}
        | Pr[CCIG(create, A, l, n)=1]- \frac{1}{2} | \text{ is negligible.}
    \end{equation*}
\end{definition}

\noindent
Next, we want to show that CRSet, defined by a create function $crset$, fulfills this definition.
We consider the output $S_i$ of the function to be a serialized format, including cascade and salt.
For this proof sketch, we consider the used hash function to be a random oracle and proceed with a specific version $G_0(A, l, n) = CCIG(crset, A, l, n)$.
For a transition based on indistinguishability \cite{shoup2004sequences}, we introduce a second game $G_1$.
It is based on a modified CRSet \textit{create} function that ignores the input sets and fills the cascade with random padding only: $\mathit{crset}'(R,V,n) = \mathit{crset}(\emptyset,\emptyset,n)$.

\begin{lemma}
    The probability to win the game $G_1$ is $Pr[G_1(A, l, n)=1] = 1/2$.
\end{lemma}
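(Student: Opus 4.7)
The plan is to show that in $G_1$ the adversary's entire view is statistically independent of the challenger's bit $b$, so no strategy can do better than a fair coin flip. First I would unfold the definition of $G_1$: the challenger picks $b$, uses the counts $N_{V_b,i}, N_{R_b,i}$ to sample sets $V_i, R_i$, and then computes $S_i = \mathit{crset}'(R_i, V_i, n)$ for each $0 \le i < l$. By the definition $\mathit{crset}'(R,V,n) = \mathit{crset}(\emptyset,\emptyset,n)$, the outputs depend only on $n$ and on the fresh internal randomness of the cascade construction (padding IDs and salt), never on $R_i$ or $V_i$.

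Next I would make the independence argument explicit. Since $n$ is fixed as a public parameter of the game and is the same regardless of $b$, and since every invocation of $\mathit{crset}'$ draws its own independent randomness (each cascade is rebuilt from scratch per Section~\ref{sec:algos}), the joint distribution of $(S_0, \ldots, S_{l-1})$ conditioned on $b=0$ equals the joint distribution conditioned on $b=1$. The adversary's final guess $\hat b$ is a (possibly randomized) function of $(\mathit{crset}, l, n, S_0, \ldots, S_{l-1})$, none of which carries information about $b$. Therefore $\Pr[\hat b = b] = 1/2$, which is exactly the claim.

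I do not anticipate a real obstacle here, since $G_1$ was engineered precisely to sever the dependence on $b$; the only care needed is to ensure that each cascade invocation uses fresh, independent randomness so that correlations across time steps cannot leak information indirectly, and to note that this lemma is unconditional (it does not rely on the random oracle model or on any efficiency assumption on $A$). This lemma then serves as the base case for a subsequent indistinguishability transition from $G_0$ to $G_1$, where the hard work of the overall proof actually lies.
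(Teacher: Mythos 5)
Your proposal is correct and follows essentially the same route as the paper: since $\mathit{crset}'$ ignores $V_i$ and $R_i$, the adversary's view is independent of $b$, so guessing is optimal and the success probability is exactly $1/2$. Your additional remarks about fresh per-invocation randomness and the lemma being unconditional are sound elaborations of the paper's one-line argument rather than a different approach.
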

\begin{proof}
Nothing in the output $S$ depends on the hidden bit $b$, because $\mathit{crset}'$ ignores the input sets $V$ and $R$. Therefore, guessing is the best possible strategy, resulting in a win half the time.
\end{proof}

\begin{lemma}
    No efficient adversary $A$ can distinguish between $G_0$ and $G_1$:
    \begin{equation*}
        |Pr[G_0(A, l, n)=1] - Pr[G_1(A, l, n)=1]| \text{ is negligible.}
    \end{equation*}
\end{lemma}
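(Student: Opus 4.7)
The plan is to argue that the joint distributions of the $l$ published cascades are statistically identical in $G_0$ and $G_1$ (up to negligible birthday-type collisions), leaving the adversary with no advantage beyond guessing. The core observation is that the fixed-ratio padding from Section~\ref{sec:padding} forces $|\hat V_i| = n_{max}$ and $|\hat R_i| = 2 n_{max}$ in both games, independently of the adversary-chosen counts $N_{V_b,i}$ and $N_{R_b,i}$.

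First, I would show that the padded inputs fed into Algorithm~\ref{alg:build} have the same distribution in both games. In $G_0$ the challenger samples the ``real'' IDs uniformly from $U = \{0,1\}^{256}$ when instantiating a plausible history, and then $\mathit{padWithRandomIDs}$ draws additional uniform IDs from $U$ subject only to distinctness and to $\hat V_i \cap \hat R_i = \emptyset$. In $G_1$, $\mathit{crset}'$ skips the first step and draws all $n_{max}$ (respectively $2 n_{max}$) IDs as padding. Because the union of a uniform random subset of size $k$ and a uniform random top-up of size $n_{max}-k$ disjoint from it is itself uniform over $k$-element subsets, the marginal distribution over pairs $(\hat V_i, \hat R_i)$ of disjoint subsets of $U$ of sizes $(n_{max}, 2 n_{max})$ is identical in the two games. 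The plausibility constraint on the history is invisible to the output because each published snapshot is rebuilt from scratch, not as a delta.

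Second, conditioned on $(\hat V_i, \hat R_i)$ and on the freshly sampled $salt_i$, the output of Algorithm~\ref{alg:build} is a deterministic function of random-oracle queries of the form $id \| j \| salt_i$. Since $salt_i$ is drawn uniformly and the IDs that populate those queries are themselves uniform in $U$, the induced distribution over serialized cascades agrees in $G_0$ and $G_1$. Composing independently across the $l$ time steps yields equality of the joint distribution of $(S_0, \dots, S_{l-1})$ in the random-oracle model.

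The main obstacle, and the source of the word ``negligible'' in the statement, will be ruling out a handful of low-probability bad events that could break exact equality: accidental ID collisions within or between the padded sets, $salt$ collisions across the $l$ snapshots that could let oracle answers be correlated across levels of different cascades, and the rare non-convergence of the cascade construction that triggers a restart with a new salt. Each such event has probability exponentially small in the $256$-bit domain, so a standard union bound over the polynomial number of oracle queries made by $A$ and over the $l$ publications absorbs them into a single negligible additive term, which is exactly the bound the lemma asserts.
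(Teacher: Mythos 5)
Your argument is correct in substance but takes a genuinely different route from the paper. The paper introduces an interpolating game $D(A,l,n,U_D)$ parameterized by the universe from which the challenger draws $V_i,R_i$ (so $U_D=2^U$ recovers $G_0$ and $U_D=\emptyset$ recovers $G_1$), and then bounds the distinguishing advantage by a generic random-oracle step: hashed-and-salted inputs from any two distributions are indistinguishable from random, hence from each other, via a triangle inequality giving a $2\epsilon_{RO}$ bound. You instead argue direct statistical closeness of the published transcripts: the fixed-ratio padding pins the per-step input sizes to $(n_{max},2n_{max})$ in both games, the uniform top-up makes the per-step marginal of $(\hat V_i,\hat R_i)$ identical, and conditioned on the random oracle the serialized cascades are then identically distributed, with all residual differences absorbed into explicitly enumerated bad events (ID collisions, salt collisions, restarts, adversary oracle queries hitting a challenger-chosen ID) handled by a union bound. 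Your version is more concrete about where the negligible terms come from and in effect proves a slightly stronger statistical statement, whereas the paper's reduction is shorter but vaguer. One point you should make explicit rather than leave buried in the bad-event list: the joint distribution of the underlying padded sets across time steps is \emph{not} the same in $G_0$ and $G_1$ (in $G_0$ the real IDs persist across snapshots, so consecutive $\hat V_i$ overlap in a way that depends on the chosen counts, while in $G_1$ they are essentially disjoint), so your claim that snapshots compose independently is not justified by rebuilding from scratch alone; it holds only because each invocation of the build algorithm draws a fresh $salt_i$ and the oracle answers for distinct salts are independent, which is exactly why a salt collision is the event that would break the argument. With that dependence on the fresh per-cascade salt stated as the lynchpin of the cross-step independence claim, your proof stands as a valid alternative to the paper's.
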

\begin{proof}
We construct the distinguishing game $D(A, l, n, U_D)$ that is a modified version of $G_0$, still with the original $crset$ function. It interpolates between our two games by drawing the sets $V_i$ and $R_i$ from the provided universe $U_D$ instead.
So if the input to $D$ is $U_D=2^U$, the original game $G_0$ is played, and if the input is $U_D=\emptyset$, the modified game $G_1$ is played. Then, the lemma is equivalent to:
\begin{equation*}
    |Pr[D(A, l, n, 2^U)=1] - Pr[D(A, l, n, \emptyset)=1]| \text{ is negligible.}
\end{equation*}

\noindent
Logically, any CRSet revocation data $S_i$ is a set of hashes. Since we assume our hash function $h: \{0,1\}^* \to \{0,1\}^\tau$ to be a random oracle, we know by definition that for any efficient adversary $A$
\begin{equation*}
\begin{split}
    | &Pr[s\leftr \{0,1\}^*, d\leftr \{0,1\}^*: A(s,h(d||s))=1]\\
    -&Pr[s\leftr \{0,1\}^*, r\leftr \{0,1\}^\tau: A(s,r)=1] |
\end{split}
\end{equation*}

\noindent
is negligible, where $\leftr$ signifies random sampling from a distribution.
For two possible input distributions $X,Y$ over $\{0,1\}^*$, after applying the triangle inequality, this implies
\begin{equation*}
\begin{split}
    | &Pr[s\leftr \{0,1\}^*, x\leftr X: A(s,h(x||s))=1]\\
    - &Pr[s\leftr \{0,1\}^*, y\leftr Y: A(s,h(y||s))=1] | \leq 2\epsilon_{RO} \textbf{,}
\end{split}
\end{equation*}
where $\epsilon_{RO}$, and thus also $2\epsilon_{RO}$, is a negligible advantage.
That implies that the hash sets produced in the game $D$ cannot be used to distinguish the different input sets, which correspond to our two games $G_0$ and $G_1$.
\end{proof}

\begin{theorem}
    CRSet is \textbf{chosen count indistinguishable}.
\end{theorem}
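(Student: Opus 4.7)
The plan is to combine the two preceding lemmas through a standard game-hopping argument using the triangle inequality. Since $G_0$ was defined to coincide with $CCIG(crset, A, l, n)$, the theorem reduces to showing that $|Pr[G_0(A,l,n){=}1] - 1/2|$ is negligible for every efficient adversary $A$, history length $l$, and capacity $n$. I would insert $Pr[G_1=1]$ as an intermediate term and bound
\begin{equation*}
    |Pr[G_0{=}1] - 1/2| \;\leq\; |Pr[G_0{=}1] - Pr[G_1{=}1]| + |Pr[G_1{=}1] - 1/2|.
\end{equation*}

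The second summand is exactly zero by Lemma~1, since $G_1$ replaces $crset$ with the padding-only $crset'$ and therefore produces a transcript that is statistically independent of the hidden bit $b$. The first summand is negligible by Lemma~2, whose random-oracle argument already established that the hash sets making up a CRSet snapshot are computationally indistinguishable from those produced on empty inputs. The sum of a negligible function and zero is negligible, which is precisely what Definition~\ref{def:countindis} demands, so the theorem follows.

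There are two small obstacles I would want to address explicitly rather than leave implicit. First, Lemma~2 as stated concerns the distinguishing game $D$ over one universe of inputs; for the $l$-snapshot transcript $S_0,\dots,S_{l-1}$ I would spell out a short hybrid argument, replacing one snapshot at a time, so that the overall advantage is bounded by $l$ times the single-snapshot gap and hence remains negligible as long as $l$ is polynomial in the security parameter $\tau$. Second, the equality $G_0 = CCIG(crset,A,l,n)$ relies on $crset$ and $crset'$ producing transcripts of identical shape; this is guaranteed by the fixed-capacity padding of Section~\ref{sec:padding}, since both functions instantiate a cascade of the same padded size $n_{max}$ and emit the same salt format, so the adversary cannot trivially separate the games by inspecting lengths or structural metadata.

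The truly hard step, which Lemma~2 already carries, is the random-oracle reduction; chaining the lemmas into the theorem is then just bookkeeping. I would therefore keep the final write-up brief: state the triangle inequality, cite Lemmas~1 and~2, note the hybrid over $l$ snapshots, and conclude.
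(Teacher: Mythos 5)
Your proposal matches the paper's proof: the theorem is obtained exactly by combining Lemma~1 and Lemma~2 via the triangle inequality, which is all the paper does. Your added remarks on the hybrid over the $l$ snapshots and on the identical transcript shape of $crset$ and $crset'$ are sound bookkeeping that the paper leaves implicit, but they do not change the route.
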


\begin{proof}
From combining the two preceding lemmas, it follows that
\begin{equation*}
    |Pr[G_0(A, l, n)=1] - 1/2| \text{ is negligible,}
\end{equation*}
which matches Definition~\ref{def:countindis}.
\end{proof}

\section{Implementation and Evaluation}
\label{sec:evaluation}
We evaluate CRSet through both a core implementation for privacy and performance analysis and a more deployment-ready end-to-end prototype. Finally, we acknowledge practical limitations.

\subsection{Empirical Core Evaluation}

We conduct several tests to back up our theoretical results. For this purpose, we built a core implementation in Python which we used to perform experiments for privacy and performance, respectively.
Our core implementation\footnote{\url{https://github.com/flhps/crset-empirical-experiments}} builds on a modified version\footnote{\url{https://github.com/flhps/rbloom}} of the \texttt{rbloom} library, using \texttt{sha256} as the hash function. We experimentally determined the optimal false positive rate as $p=0.53$, close to the theoretical optimum of $p=1/2$. To ensure reliable creation, we initialize Bloom filters for at least $1024$ entries, preventing unexpectedly high false positive rates on the final cascade levels.

Beyond our formal privacy guarantees, we conduct empirical machine learning experiments to observe privacy preservation in practice. These experiments demonstrate resistance against statistical inference attacks and provide complementary evidence of CRSet's privacy properties.
Our experiments employ supervised learning to assess if an adversary could predict revoked \glspl{vc} counts from cascade structural properties. We generate two million-entry datasets (with and without padding) using a cascade capacity of approximately $\hat r = 10^5$. Features include total size, filter count, sizes of the first three filters, and their set bit counts, normalized with standard scaling. We use an 80-20 train-test split with ridge and lasso regression, measuring performance through \gls{mse} and R².
This approach seems to confirm CRSet's privacy protection. For padded cascades, regression models achieve R² values near 0, indicating structural features reveal no information about revocation numbers. This is supported by \gls{mse} values comparable to the underlying data variance, with models defaulting to near-average predictions regardless of input features. In contrast, unpadded cascades readily reveal \gls{vc} counts, confirming padding's essential role. Similar results occur when predicting valid \glspl{vc} counts.

\begin{figure}[t]
    \centering
    \includegraphics[width=\textwidth]{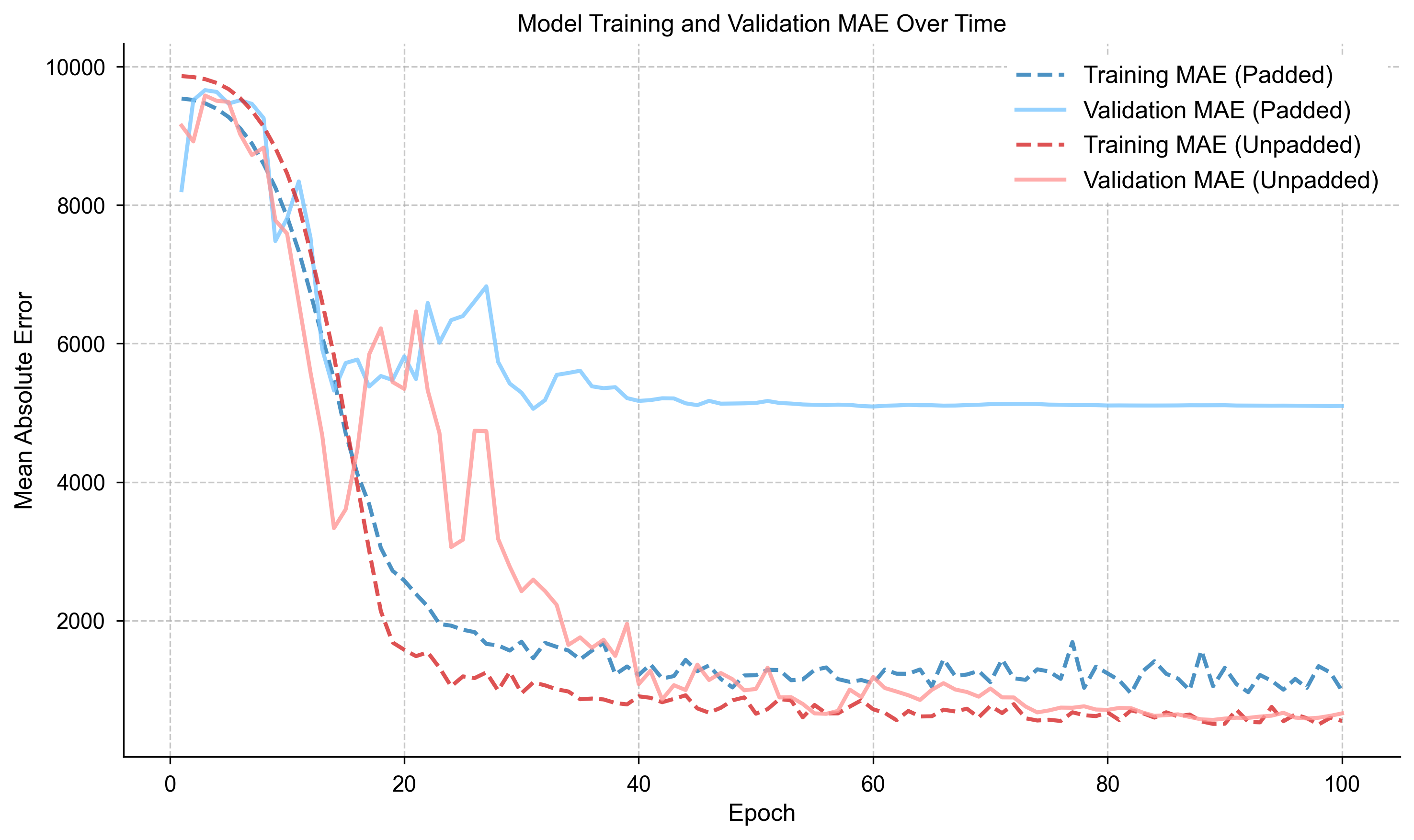}
    \caption{Model Training Without Manual Feature Engineering and Validation \gls{mae} Over Time for Padded and Unpadded Datasets}
    \label{fig:training}
\end{figure}

To validate our privacy mechanism without manual feature engineering, we also evaluate two uniformly distributed datasets of 10,000 data points each (with and without padding) with mean counts of close to 10,000 valid \glspl{vc}. These values are chosen to still be able to perform experiments on consumer hardware. We use a deep neural network with multiple dense layers, dropout, and batch normalization, using an 80-10-10 train-validation-test split and \gls{mae} as our metric.
Figure~\ref{fig:training} reveals distinct behaviors between padded and unpadded cascades. For unpadded cascades, both training and validation \gls{mae} converge to zero, confirming the network can extract \gls{vc} counts from binary representations. However, with CRSet's padded cascades, while training error decreases (showing memorization), validation error stabilizes around $5,000$ after starting near $10,000$ (random prediction), precisely matching the expected absolute deviation for a uniform distribution. This convergence to theoretical random prediction error empirically supports CRSet's privacy-preserving characteristics.

\begin{figure}[t]
    \centering
    \includegraphics[width=\textwidth]{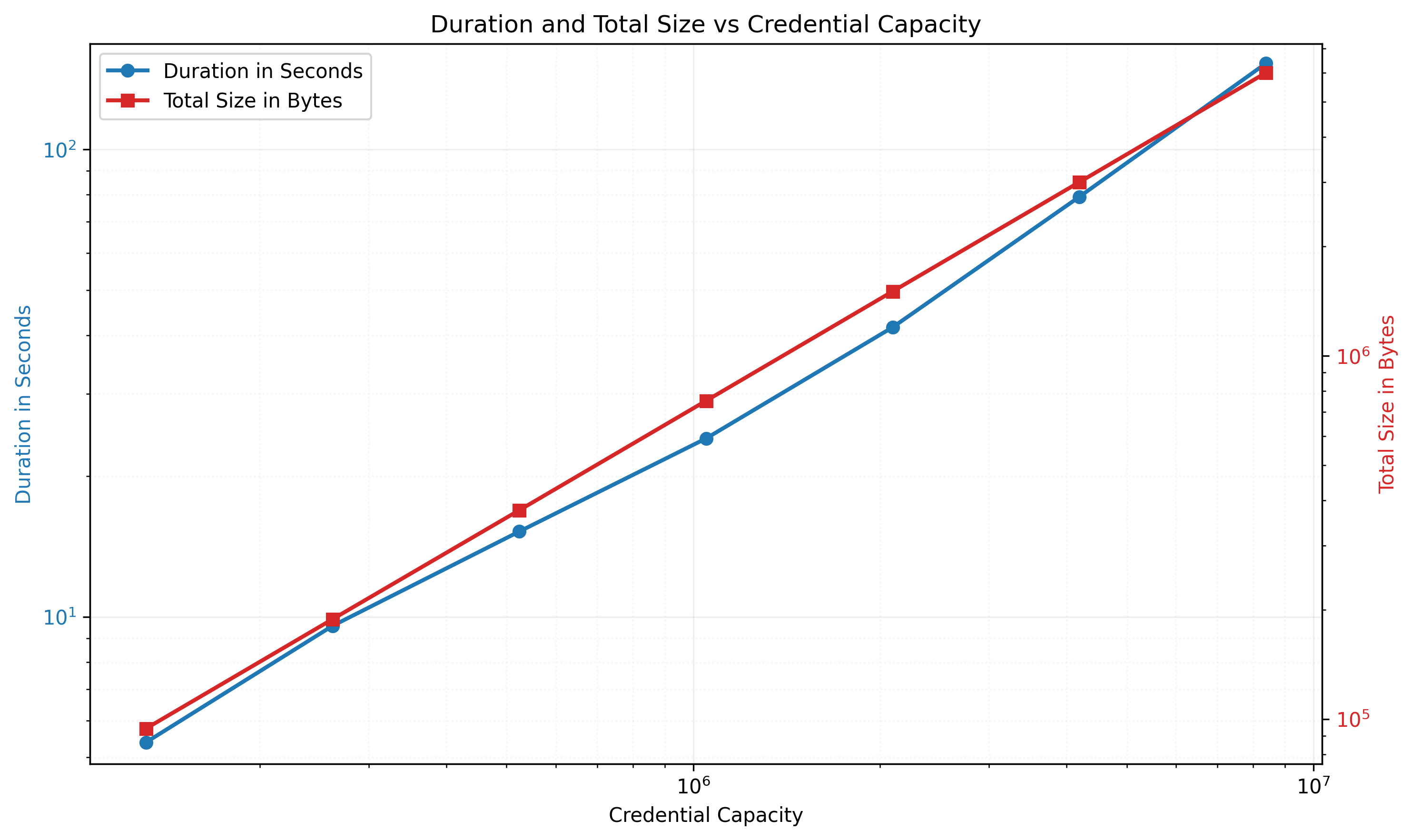}
    \caption{Creation Duration and Total Size of CRSets Depending on Credential Capacity}
    \label{fig:performance}
\end{figure}

We conducted performance evaluations on a system with an AMD Ryzen 7 Pro 7840U processor and 32 GB of RAM. Our analysis reveals that both creation duration and total size exhibit near-linear growth in logarithmic space as capacity increases. Duration scales predictably from 5 seconds at $10^2$ \glspl{vc} to 200 seconds at $10^7$ \glspl{vc}, with total size scaling proportionally from $10^5$ to $10^7$ bytes. These consistent scaling characteristics enable accurate resource requirement predictions based on anticipated VC volume.
While creation failures are possible, our reliability testing across 100,000 creation attempts observed none.

\subsection{End-to-End Evaluation}

We implemented the full CRSet mechanism in JavaScript for web-based projects, using Ethereum for decentralized publishing. Our implementation comprises:

\begin{itemize}
    \item \textit{CRSet Cascade}: A pure JavaScript library implementing our padded cascade, using \texttt{sha256} as a robust cryptographic hash function.
    \item \textit{CRSet Check (CRSC)}: A library enabling CRSet revocation checks for \glspl{rp}.
    \item \textit{CRSet Issuer Backend (CRSIB)}: A standalone backend service for issuers to manage their CRSet, including revocation ID tracking and CRSet updates.
    \item \textit{CRSet Demonstrator}\footnote{\url{https://github.com/flhps/crset-demo}}: A demonstration scenario making use of the previous components, featuring mock web applications for an issuer and an \gls{rp}, which interact using \gls{oid4vc} protocols \cite{oid4vc}.
\end{itemize}

The issuer backend provides a REST API allowing issuers to create credential status entries that can be written directly into \gls{vc} payloads before signing. These entries are stored in the backend's database, with additional endpoints for listing statuses, revoking \glspl{vc} by ID, and triggering CRSet creation and publishing via blob transactions.
Publishing costs fluctuate heavily with blob gas fees (2-50 Gwei in late 2024 and early 2025\footnote{\url{https://blobscan.com/stats}}), resulting in per-blob costs ranging from cents to approximately 57 USD (at 1 ETH = 3574 USD).
However, the Pectra upgrade has increased the possible number of blobs per block \cite{eip7691} and quickly brought prices for one blob down to cents.

For verification, \glspl{rp} first use standard \gls{vc} verifier software for format and signature checks, then call the CRSC to verify revocation status. From the address in the \texttt{statusEntry.id} field, the CRSC retrieves the newest blob-carrying transaction via blockchain RPC (using Moralis\footnote{\url{https://developers.moralis.com/}}), which yields one or more blob hashes \cite{eip4844}. Since blob data resides on the consensus layer, the library fetches it from a consensus node (via Blobscan\footnote{\url{https://api.blobscan.com/}}), concatenates and deserializes the blobs into a filter cascade, and checks if the revocation ID (the last element in the \texttt{statusEntry.id} field) is included. Verification performance is primarily constrained by these two network round-trips.

\subsection{Practical Limitations}

CRSet's design has practical limitations. Since it is a hash-based mechanism and the data is public, long-term privacy cannot be guaranteed. With time, an adversary may eventually be able to extract information by brute-force.
Also, like any other capacity-based revocation mechanism, CRSet implicitly telegraphs that a prior instance was likely filled up whenever a new one is created. However, this is only detectable by seeing a new \gls{vc} using the new instance (i.e., new Ethereum address), because addresses are not inherently linked.
In many cases, instances can likely be sized to last decades, severely limiting leaked information.

The blockchain-based publishing approach also has practical challenges.
The operational complexity of accessing blockchain blob data could limit the confidentiality of presentation activity, for example through the use of third-party API providers.
Also, while the current prices for blobs are very low, there is no hard upper bound because they are directly driven by the exchange rate of ether and the blob demand.

\section{Conclusion}
\label{sec:conclusion}
In this paper, we identified the privacy of an issuer's activity as an important yet underexplored requirement in SSI. To address this, we introduced CRSet, a non-interactive revocation mechanism based on padded Bloom filter cascades. We formalized a privacy guarantee using a game-based security model and provide proof demonstrating that CRSet achieves chosen count indistinguishability. Our empirical evaluation confirms that CRSet performs reliably at scale. While the mechanism is technically deployable on existing blockchain infrastructure via blob-carrying transactions, the lack of upper bound for the associated cost remains a challenge for productive adoption.

\begin{credits}
\subsubsection{\ackname} We thank the Ethereum Foundation for their support in the form of an Ethereum Academic Grant under reference number FY24-1545.
We thank Evan Christopher, Zexin Gong, Chan-Young Lee, and Natalia Milanova for their work on prototype libraries.

\end{credits}

\appendix

%
%
%
\bibliographystyle{splncs04}
\bibliography{bibliography}

\end{document}